\tikzstyle{every node}=[font=\small]
\begin{document}

\newtheorem{thm}{Theorem}[section]
\newtheorem{pro}[thm]{Proposition}
\newtheorem{lem}[thm]{Lemma}
\newtheorem{cor}[thm]{Corollary}

\theoremstyle{definition}
\newtheorem{con}[thm]{Convention}
\newtheorem{dfn}[thm]{Definition}
\newtheorem{clm}[thm]{Claim}
\newtheorem{exa}[thm]{Example}
\newtheorem{exs}[thm]{Examples}
\newtheorem{rmk}[thm]{Remark}

\linespread{1.1}

\newcommand{\abs}[1]{\left\lvert#1\right\rvert}

\newcommand\Si{\mathbf{\Sigma}}
\newcommand\Non{\mathbf{N}}
\newcommand\Pro{\mathbf{P}}
\newcommand\Ccal{\mathcal{C}}
\newcommand\Tcal{\mathcal{T}}
\newcommand\Gcal{\mathcal{G}}
\newcommand\equi{\; \Leftrightarrow \;}
\newcommand\maxi{\mathrm{max}}

\title{Comparing consecutive letter counts in multiple context-free languages}
\author{Florian Lehner\thanks{F.\ Lehner was supported by FWF (Austrian Science Fund) project J3850-N32}\; and Christian Lindorfer\thanks{C.\ Lindorfer was partially supported by FWF (Austrian Science Fund) projects P31237 and DK W1230.}}
\affil{\normalsize Institut f\"ur Diskrete Mathematik, 
Technische Universit\"at Graz,
Austria}



\date{\today} 
		  




\maketitle

\begin{abstract}
Context-free grammars are not able to capture cross-serial dependencies occurring in some natural languages. To overcome this issue, Seki et al.\ introduced a generalization called $m$-multiple context-free grammars ($m$-MCFGs), which deal with $m$-tuples of strings. We show that $m$-MCFGs are capable of comparing the number of consecutive occurrences of at most $2m$ different letters. In particular, the language $\{a_1^{n_1} a_2^{n_2} \cdots a_{2m+1}^{n_{2m+1}} \mid n_1 \geq n_2 \geq \dots \geq n_{2m+1} \geq 0\}$ is $(m+1)$-multiple context-free, but not $m$-multiple context-free.
\end{abstract}

\section{Introduction}

Formal language theory makes use of mathematical tools to study the syntactical aspects of natural and artificial languages. Two of the best known and most studied classes of formal languages are context free languages and context sensitive languages, generated by context free grammars and context sensitive grammars, respectively. Context-free grammars have convenient generative properties, but they are not able to model cross-serial dependencies, which occur in Swiss German and a few other natural languages. The expressive power of context-sensitive grammars on the other hand often exceeds our requirements, and the decision problem whether a given string belongs to the language generated by such a grammar is PSPACE-complete.

To overcome these issues, intermediate classes of `mildly context sensitive languages' were independently introduced by {\scshape Vijay-Shanker et al.~\cite{VS87}} and {\scshape Seki et al.~\cite{Se91}} in the form of context-free rewriting systems and multiple context-free grammars (MCFGs). These concepts  turn out to be equivalent in the sense that they both lead to the same class of languages, called multiple context-free languages (MCFLs). While MCFGs are able to model cross-serial dependencies by dealing with tuples of strings, the languages generated by them share several important properties with context free languages, such as polynomial time parsability and semi-linearity. 

MCFLs can be distinguished depending on the largest dimension $m$ of tuples involved. The $m$-MCFLs obtained in this way form an infinite strictly increasing hierarchy 
\[\text{CFL} = \text{1-MCFL} \subsetneq \text{2-MCFL} \subsetneq \ldots \subsetneq \text{$m$-MCFL} \subsetneq \text{$(m+1)$-MCFL} \subsetneq \ldots \subsetneq \text{CSL},\]

where CFL and CSL denote the classes of context free languages and context sensitive languages, respectively.

A highlight in the theory of MCFGs is a result by {\scshape Salvati~\cite{Sa15}}, stating that the language $O_2=\{w \in \{a,\bar{a},b,\bar{b}\}^* \mid \abs{w}_a=\abs{w}_{\bar{a}} \wedge \abs{w}_b=\abs{w}_{\bar{b}}\}$ occurring as the word problem of the group $\mathbb{Z}^2$ is a 2-MCFL. Moreover the language $\mathrm{MIX}=\{w \in \{a,b,c\}^* \mid \abs{w}_a=\abs{w}_b=\abs{w}_c\}$ is rationally equivalent to $O_2$ and thus also a 2-MCFL. {\scshape Ho~\cite{Ho18}} generalized this result by showing that for any positive integer $d$ the word problem of $\mathbb{Z}^d$ is multiple context-free.

In this paper we study languages defined by comparing lengths of runs of consecutive identical letters and show that they are able to separate the layers of the hierarchy mentioned above. In particular we consider languages of the form
\[
L_k=\{a_1^{n_1} a_2^{n_2} \cdots a_{k}^{n_k} \mid n_1 \geq n_2 \geq \dots \geq n_k \geq 0\}
\]
and generalisations thereof. The languages $L_1$ and $L_2$ are easily seen to be context-free, and it is a standard exercise to show that $L_3$ is not context-free by using the pumping lemma for context free languages. Our main result generalises these observations.

\begin{thm}
\label{thm:main}
The language $L_k=\{a_1^{n_1} a_2^{n_2} \cdots a_{k}^{n_k} \mid n_1 \geq n_2 \geq \dots \geq n_k \geq 0\}$ is a $\lceil k/2\rceil$-MCFL but not a $(\lceil k/2\rceil-1)$-MCFL.
\end{thm}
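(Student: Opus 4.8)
The plan is to prove the two inequalities separately: an explicit $\lceil k/2\rceil$-dimensional grammar generating $L_k$, and a pumping argument showing $L_k$ is not a $(\lceil k/2\rceil-1)$-MCFL. The non-membership half reduces to the single core lemma that $L_{2m+1}$ is not an $m$-MCFL for every $m\ge 0$. Indeed, for odd $k=2m+1$ we have $\lceil k/2\rceil-1=m$, which is exactly the core lemma; and for even $k=2m+2$ we have $\lceil k/2\rceil-1=m$, while the erasing homomorphism $h$ that deletes the last letter $a_{2m+2}$ satisfies $h(L_{2m+2})=L_{2m+1}$, so, since $m$-MCFLs are closed under homomorphism, $L_{2m+2}$ being an $m$-MCFL would force $L_{2m+1}$ to be one as well. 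Thus both parities reduce to the core lemma.

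\textbf{Upper bound.} For even $k=2m$ I would use a single nonterminal $A$ of dimension $m$ whose $i$-th component is intended to hold the block $a_{2i-1}^{\,p_i}a_{2i}^{\,q_i}$. Each production corresponds to adding one ``token of type $t$'' that increments the exponents $n_1,\dots,n_t$ by one: for $1\le i<\lceil t/2\rceil$ the rule wraps the $i$-th component as $a_{2i-1}X_i a_{2i}$; the $\lceil t/2\rceil$-th component is wrapped if $t$ is even and only left-extended by $a_t$ if $t$ is odd; all higher components are left untouched; and the start rule concatenates the components in order. A short induction shows the reachable tuples are exactly those with $p_1\ge q_1\ge p_2\ge\cdots$, i.e.\ $n_1\ge n_2\ge\cdots\ge n_{2m}\ge 0$, so the generated language is $L_{2m}$. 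The odd case $k=2m+1$ is identical except that the final component holds the single block $a_{2m+1}^{\,n_{2m+1}}$ and the dimension is $m+1=\lceil k/2\rceil$.

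\textbf{Lower bound (core lemma).} Here I would invoke the pumping lemma for $m$-MCFLs: there is a constant $K$ such that every sufficiently long $z\in L$ admits a decomposition $z=x_0u_1x_1u_2\cdots u_{2m}x_{2m}$ with $u_1\cdots u_{2m}$ nonempty and $x_0u_1^{\,p}x_1u_2^{\,p}\cdots u_{2m}^{\,p}x_{2m}\in L$ for every $p\ge 0$ (the number of factors being twice the maximal tuple dimension, and the pumping being two-sided). Apply this to the witness $z=a_1^{N}a_2^{N}\cdots a_{2m+1}^{N}\in L_{2m+1}$ with $N>K$. Since every pumped word must again have the sorted shape $a_1^{*}\cdots a_{2m+1}^{*}$, each factor must be a power of a single letter, say $u_i=a_{j_i}^{\,e_i}$; writing $\delta_\ell=\sum_{j_i=\ell}e_i$, the $a_\ell$-count of the pumped word is $N+(p-1)\delta_\ell$. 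The constraint $n_\ell\ge n_{\ell+1}$ for all $p\ge 0$ reads $(p-1)(\delta_\ell-\delta_{\ell+1})\ge 0$ for all $p\ge 0$, and taking $p>1$ together with $p=0$ forces $\delta_\ell=\delta_{\ell+1}$. Hence all $\delta_\ell$ equal a common value $\delta$, which is positive since $\sum_\ell\delta_\ell=\sum_i e_i\ge 1$. Therefore each of the $2m+1$ letters must occur in some factor $u_i$, yet there are only $2m$ factors --- a contradiction. This proves the core lemma, and with the homomorphism reduction above it completes the lower bound.

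\textbf{Main obstacle.} The construction is routine bookkeeping and the pigeonhole contradiction is short once the pumping lemma is available; the genuine work is pinning down the precise pumping lemma for the $m$-MCFL hierarchy, namely that a nonterminal of dimension $d$ yields $2d$ \emph{simultaneously} and \emph{equally} pumpable contiguous factors and that the pumping is two-sided. I expect this to be the technical heart: the direction $p=0$ is exactly what upgrades the inequalities $\delta_\ell\ge\delta_{\ell+1}$ to the equalities $\delta_\ell=\delta_{\ell+1}$ needed to force every letter to be touched, and the factor count $2m$ (obtained from the usual ``a long derivation repeats a nonterminal'' tree argument) is what makes the clean $2m+1>2m$ counting bite.
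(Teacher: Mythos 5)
Your upper bound is essentially the paper's construction (one nonterminal of rank $\lceil k/2\rceil$ whose $i$-th component carries the block $a_{2i-1}^{p_i}a_{2i}^{q_i}$, with rules incrementing a prefix of the exponent vector), and that half is fine. The lower bound, however, rests on a lemma that is false. You invoke a pumping lemma asserting that \emph{every} sufficiently long word of an $m$-MCFL decomposes as $x_0u_1x_1\cdots u_{2m}x_{2m}$ with all $2m$ factors simultaneously pumpable, including down to $p=0$. No such lemma exists for general $m$-MCFLs: Seki et al.\ only prove a weak version guaranteeing that \emph{some} word of an infinite $m$-MCFL is $m$-pumpable, Kanazawa's strong version holds only for the well-nested subclass, and Kanazawa et al.\ exhibit a 3-MCFL with infinitely many words that are not $k$-pumpable for any $k$. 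So the step you yourself flag as ``the technical heart'' cannot be supplied, and your witness word $a_1^N\cdots a_{2m+1}^N$ is not guaranteed to be among the pumpable words that the weak lemma provides.

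The paper gets around this by working directly with a derivation tree $D$ of $a_1^n\cdots a_m^n$ in normal form: a long root path must repeat some combiner rule $\rho$ three times, giving nested subtrees $D_3\subseteq D_2\subseteq D_1$ with equal root labels, and swapping $D_1$ for $D_2$ (or vice versa) is a legal derivation-tree surgery. Comparing letter counts before and after the swap forces $\abs{D_1}_j-\abs{D_2}_j$ to be the same for all $j$, whence (using the second repetition to get strict growth) every letter occurs in the term at $D_2$ with multiplicity strictly between $0$ and $n$. The contradiction then comes not from counting $2m$ pumpable factors against $2m+1$ letters, but from counting the at most $\lceil m/2\rceil-1$ tuple components of $D_2$: each is a contiguous substring of $a_1^n\cdots a_m^n$, so covering all $m$ letters forces one component to contain a full block $a_i^n$, contradicting $\abs{D_2}_i<n$. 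This is a genuinely different (and necessary) route; if you want to salvage your argument you would have to either restrict to well-nested grammars or replace the pumping lemma by such a derivation-tree substitution argument.
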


The first part of Theorem~\ref{thm:main} is verified by constructing an appropriate grammar. For the second part, one might hope that it is implied by a suitable generalisation of the pumping lemma to $m$-MCFLs, but unfortunately such a generalisation does not exist.

A weak pumping lemma for $m$-MCFLs due to {\scshape Seki et al.~\cite{Se91}} which generalises pumpability of words to $m$-pumpability only confirms the existence of $m$-pumpable strings in infinite $m$-MCFLs and not that all but finitely many words in the language are $m$-pumpable. In particular, it is not strong enough to imply the second part of Theorem~\ref{thm:main}. While {\scshape Kanazawa~\cite{Ka09}} managed to prove a strong version of the pumping lemma for the sub-class of well-nested $m$-MCFLs, {\scshape Kanazawa et al.~\cite{Ka14}} showed that in fact such a pumping lemma cannot exist for general $m$-MCFLs by giving a 3-MCFL containing infinitely many words which are not $k$-pumpable for any given $k$. Nevertheless, our proof relies heavily on the idea of pumping, thus showing that this technique can be useful even in cases where no pumping lemma is available.

\section{Definitions and notation}

For an \emph{alphabet} (finite set of letters) $\Si$ we denote by 
\[
\Si^*=\{w=a_1 a_2 \cdots a_n \mid n \geq 0, a_i \in \Si\}
\]
the set of all finite words over $\Si$. A \emph{formal language} over $\Si$ is a subset of $\Si^*$.

The \emph{length} $|w|$ of a word $w = a_1a_2\cdots a_n$ is the number $n$ of letters contained in it. We write $\epsilon$ for the word of length zero and $a^n$ for the word obtained by $n$-fold repetition of the letter $a$.  
 
In this paper we focus on languages based on comparing consecutive occurrences of different letters. In order to formally define these languages, we need some preliminary definitions.
A \emph{preorder} $\preceq$ on a set $M$ is a reflexive and transitive binary relation on $M$. In contrast to partial orders, preorders need not be antisymmetric, that is, $a \preceq b$ and $b\preceq a$ may be true at the at the same time for different elements $a,b$. A preorder $\preceq$ is called \emph{total} if for all $a,b \in M$ at least one of $a \preceq b$ and $b \preceq a$ holds. The \emph{comparability graph} of a preorder is the simple undirected graph with vertex set $M$, where two different vertices $u$ and $v$ are connected by an edge if they are comparable. We call a preorder \emph{connected}, if its comparability graph is connected. Note that any total preorder is connected, but a connected preorder does not have to be total.

For a positive integer $m$ and a preorder $\preceq$ on $[m]:=\{1,2,\dots,m\}$ define the language $L_\preceq$ over the alphabet $\Sigma=\{a_1, \dots, a_m\}$ by
\[L_\preceq=\{a_1^{n_1} a_2^{n_2} \cdots a_m^{n_m} \mid i \preceq j \Rightarrow n_i \leq n_j\}.\]

A preorder $\preceq'$ on $M$ is said to be a \emph{totalisation} of a preorder $\preceq$ on $M$, if it is total and extends $\preceq$, that is, whenever $a \preceq b$ also $a \preceq' b$. Let $T_{\preceq}$ be the set of totalisations of $\preceq$.

\begin{rmk} \label{rmk:precequnion}
Let $w=a_1^{n_1} a_2^{n_2} \cdots a_m^{n_m} \in L_\preceq$. The binary relation $\preceq'$ on $[m]$ defined by $i \preceq' j$ if $n_i \leq n_j$ is a totalisation of $\preceq$. Consequently,
\[L_\preceq=\bigcup_{\preceq' \in T_\preceq} L_{\preceq'}.\]
\end{rmk}

A natural way of specifying a language is by giving a grammar which generates it. In this paper we focus on multiple context-free languages and their generating grammars, which we shall now define.

Let $\Si$ be an alphabet and $\Non$ be a finite ranked set of non-terminals, that is, $\Non$ is the disjoint union $\Non=\bigcup_{r \in \mathbb{N}} \Non^{(r)}$ of finite sets $\Non^{(r)}$. Note that since $\Non$ is finite, all but finitely many $N^{(r)}$ must be empty. The elements of $N^{(r)}$ are called \emph{non-terminals of rank $r$}. A \emph{production rule} $\rho$ over $(\Non, \Si)$ is an expression 
\[
A(\alpha_1, \dots, \alpha_r) \leftarrow A_1(x_{1,1}, \dots, x_{1,r_1}), \dots, A_n(x_{n,1}, \dots, x_{n,r_n}),
\]
where
\begin{enumerate}[label=(\roman*)]
    \item $n \geq 0$,
    \item $A \in \Non^{(r)}$ and $A_i \in N^{(r_i)}$ for all $i \in [n]$,
    \item $x_{i,j}$ are variables,
    \item $\alpha_1, \dots, \alpha_r$ are strings over $\Si \cup \{x_{i,j} \mid i \in [n], j \in [r_i]\}$, such that each $x_{i,j}$ occurs at most once in $\alpha_1 \cdots \alpha_r$.
\end{enumerate}
Production rules satisfying $n=0$ are called \emph{terminating rules}.

For $A \in \Non^{(r)}$ and words $w_1, \dots, w_r \in \Si^*$ we call $A(w_1, \dots, w_r)$ a \emph{term}. Let $\rho$ be a production rule as above. The \emph{application} of $\rho$ to a sequence of $n$ terms $(A_i(w_{i,1}, \dots, w_{i,r_i}))_{i \in [n]}$ yields the term $A(w_1, \dots, w_r)$, where $w_l$ is obtained from $\alpha_l$ by substituting every variable $x_{i,j}$ by the word $w_{i,j}$ for $l \in [r]$.

A \emph{multiple context-free grammar} is a quadruple $\mathcal{G}=(\Non,\Si,\Pro,S)$, where $\Non$ is a finite ranked set of non-terminals, $\Si$ is an alphabet, $\Pro$ is a finite set of production rules over $(\Non,\Si)$ and $S \in \Non^{(1)}$ is the start symbol. The grammar $\Gcal$ is \emph{$m$-multiple context-free} or a $m$-MCFG, if the rank of all non-terminals in $\Non$ is at most $m$.

We call a term $T$ \emph{derivable} in $\Gcal$ and write $\vdash T$ if there is a rule $\rho$ and a sequence of derivable terms $\mathcal{A}$ such that the application of $\rho$ to $\mathcal{A}$ yields $T$. Note that if $\rho=A(w_1, \dots, w_r) \leftarrow$ is a terminating rule, then $\mathcal{A}$ is the empty sequence. Thus the term $A(w_1, \dots, w_r)$ is derivable.

The language generated by $\Gcal$ is the set $L(\Gcal)=\{w \in \Si^* \mid \; \vdash S(w)\}$. We call a language $m$-\emph{multiple context-free} or an $m$-MCFL, if it is generated by an $m$-MCFG.

By the following lemma it is enough to consider MCFGs in a certain normal form.

\begin{lem}[{{\scshape Seki et al.~\cite[Lem.~2.2]{Se91}}}] \label{lem:normalform}
Every $m$-MCFL is generated by an $m$-MCFG satisfying the following conditions.
\begin{enumerate}[label=(\roman*)]
    \item If $A(\alpha_1, \dots, \alpha_r) \leftarrow A_1(x_{1,1}, \dots, x_{1,r_1}), \dots, A_n(x_{n,1}, \dots, x_{n,r_n})$ is a non-terminating rule, then the string $\alpha_1 \cdots \alpha_r$ contains each $x_{i,j}$ exactly once and does not contain elements of $\Si$.
    \item If $A(w_1, \dots, w_r) \leftarrow$ is a terminating rule, then the string $w_1 \cdots w_r$ contains exactly one letter of $\Si$.
\end{enumerate}
\end{lem}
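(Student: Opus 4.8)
The plan is to start from an arbitrary $m$-MCFG $\Gcal=(\Non,\Si,\Pro,S)$ generating the given language and to apply a finite sequence of language-preserving transformations, each of which keeps the maximal rank at most $m$, until both conditions hold. There are three obstructions to the normal form: variables that occur zero times in a rule (the definition only forbids more than one occurrence), terminal symbols sitting inside non-terminating rules, and terminating rules that do not carry exactly one letter. I would remove these in turn, taking care that no step raises the rank of any non-terminal, so that the output is again an $m$-MCFG.

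First I would discard all non-productive and unreachable non-terminals. To upgrade the ``at most once'' of the general definition to ``exactly once'' I then eliminate variables occurring zero times, which is precisely a right-hand coordinate that the rule does not use. For a non-terminal $A$ of rank $r$ and a subset $U\subseteq[r]$, I introduce $A{\downharpoonright}U$ of rank $|U|$ whose rules arise from those of $A$ by deleting the output components outside $U$, and I replace each right-hand occurrence $A_i(\dots)$ by $A_i{\downharpoonright}U_i$, where $U_i$ collects the coordinates of $A_i$ whose variables actually appear in that rule. Deleting components may leave further variables unused, so I iterate; since there are only finitely many pairs $(A,U)$ the process terminates, and at the fixpoint every variable of every right-hand side occurs, hence, being present at most once, exactly once. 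Projection only lowers ranks, so we stay within $m$-MCFGs. Note that empty output components are still permitted by condition (i) and need no treatment at this stage.

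The main obstacle is condition (ii): a terminating rule producing an all-empty tuple carries no letter and cannot be repaired locally, and such rules can also arise indirectly by combining non-terminals that derive empty components. To control this I would eliminate empty components altogether by a subset construction on emptiness patterns. For each non-terminal $A$ of rank $r$ and each $S\subseteq[r]$ I introduce $A^{[S]}$ of rank $|S|$, intended to generate the projections to $S$ of exactly those tuples derivable from $A$ whose non-empty coordinates are precisely $S$. Each rule is replaced by all rules obtained by fixing, for every right-hand non-terminal $A_i$, the emptiness pattern $S_i$ of its tuple: variables in coordinates outside $S_i$ are substituted by $\epsilon$, the non-terminal $A_i$ is replaced by $A_i^{[S_i]}$, and the components of the output that are empty after this substitution (accounting also for any terminals present) determine the left-hand non-terminal $A^{[S]}$. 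The points to verify are that this preserves the generated language and that the non-vacuous patterns $A^{[S]}$---in particular the nullable case $S=\emptyset$---are determined consistently by a productiveness/least-fixpoint computation; as before, ranks can only drop. Since the construction drops empty coordinates (together with the variables feeding them) simultaneously from the output and from the corresponding right-hand non-terminal, the ``exactly once'' property from the previous step is retained, and afterwards no derivable tuple has an empty coordinate.

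Finally I would separate the terminals. For each $a\in\Si$ I add a fresh rank-$1$ non-terminal $N_a$ and the terminating rule $N_a(a)\leftarrow$, and in every other rule I replace each terminal occurrence in the output by a fresh variable bound to $N_a$ on the right-hand side; this clears all terminals out of the non-terminating rules, establishing condition (i). Any terminating rule other than the $N_a(a)$ now has only non-empty components, hence at least one letter, and after the replacement it becomes a non-terminating rule over copies of the $N_a$ in which every variable occurs exactly once and no terminal appears. Thus the only surviving terminating rules are the $N_a(a)\leftarrow$, each containing exactly one letter, which is condition (ii). All added non-terminals have rank $1$, so the grammar is still an $m$-MCFG generating the same language, with the single caveat, familiar from the Chomsky normal form for context-free grammars, that a grammar of this shape cannot produce the empty word, which must be accommodated separately when $\epsilon$ lies in the language.
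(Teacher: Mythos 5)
The paper gives no proof of this lemma at all---it is quoted directly from Seki et al.~\cite[Lem.~2.2]{Se91}---and your construction (removing useless non-terminals, projecting away unused tuple components to upgrade ``at most once'' to ``exactly once'', the emptiness-pattern subset construction $A^{[S]}$ with a least-fixpoint productiveness computation, then outsourcing terminals to rank-$1$ non-terminals $N_a$ with rules $N_a(a)\leftarrow$) is essentially the standard argument behind that cited result, and it is correct. Your closing caveat is also well spotted: a grammar satisfying (i) and (ii) only derives words of length at least $1$, so the lemma as stated silently needs the usual exception for $\epsilon$ --- which is harmless here, since the paper applies the normal form only to derivations of the long words $a_1^n a_2^n \cdots a_m^n$.
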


A \emph{rooted tree} $T$ is a tree with a designated root vertex. The \emph{descendants} of a vertex $v$ of $T$ are all vertices $u$ such that if $v$ lies on the unique shortest path from $u$ to the root of $T$. Descendants of $v$ adjacent to $v$ are called \emph{children} of $v$.  A rooted tree is called ordered, if an ordering is specified for the children of each vertex. The \emph{subtree rooted at} a vertex $v$ of $T$ is the subgraph of $T$ consisting of $v$ and its descendants and all edges incident to these descendants.

Derivation trees for multiple context-free languages were first defined by {\scshape Seki et al.~\cite{Se91}}; we will use a slight variation of their definition. Let $\Gcal=(\Non,\Si,\Pro,S)$ be an MCFG. An ordered rooted tree $D$ whose vertices are labelled with elements of $\Pro$ is a \emph{derivation tree} of a term $T$, if the tree and its labelling satisfy the following conditions.

\begin{enumerate}[label=(\roman*)]
    \item The root of $D$ has $n \geq 0$ children and is labelled with a rule $\rho \in \Pro$. 
    \item For $i \in [n]$ the subtree $D_i$ rooted at the $i$-th child of the root of $D$ is a derivation tree of a term $T_i$.
    \item The rule $\rho$ applied to the sequence $(T_i)_{i \in [n]}$ yields $T$.  
\end{enumerate}

It is not hard to see that $\vdash A(w_1, \dots, w_r)$ if and only if there is a derivation tree $D$ of $A(w_1, \dots, w_r)$. However, in general such a derivation tree need not be unique. We denote by $\ell(D)$ the label of the root of $D$. 
\begin{rmk}\label{rmk:subs}
Let $D$ be a derivation tree and let $v$ be a vertex of $D$. Then by definition replacing the subtree $D'$ of $D$ rooted at $v$ by a derivation tree $D''$ satisfying $\ell(D'')=\ell(D')$ yields a derivation tree.
\end{rmk}

\section{Main result}

We split the proof of our main result into two parts, covered by Theorem~\ref{thm:main_general1} and Theorem~\ref{thm:main_general2}, respectively. Together, these two results clearly imply Theorem~\ref{thm:main}; it is also worth pointing out that in fact they cover the (much larger) class of languages $L_\preceq$ as introduced in the previous section.

\begin{thm}
\label{thm:main_general1}
For every preorder $\preceq$ the language $L_\preceq=\{a_1^{n_1} a_2^{n_2} \cdots a_m^{n_m} \mid  i \preceq j \Rightarrow n_i \leq n_j \}$ over the alphabet $\Si=\{a_1, \dots, a_m\}$ is a $\lceil m/2\rceil$-MCFL.
\end{thm}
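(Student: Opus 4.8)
The plan is to construct, directly from the preorder $\preceq$, an $\lceil m/2\rceil$-MCFG generating $L_\preceq$ that uses a single non-terminal $B$ of rank $\lceil m/2\rceil$ besides the start symbol $S$. The idea is to pair the letters, letting the $i$-th component of $B$ carry the block $a_{2i-1}^{\,p}a_{2i}^{\,q}$ (for $i=\lceil m/2\rceil$ and $m$ odd the block is just $a_m^{\,p}$, since index $2i=m+1$ does not occur). Such a block is grown \emph{from both ends}: prepending an $a_{2i-1}$ increases its first exponent and appending an $a_{2i}$ increases its second, so after any sequence of such operations the $i$-th component is automatically of the form $a_{2i-1}^{\,p}a_{2i}^{\,q}$. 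A single start rule $S(x_1x_2\cdots x_{\lceil m/2\rceil})\leftarrow B(x_1,\dots,x_{\lceil m/2\rceil})$ then concatenates the blocks in order and yields a word $a_1^{n_1}a_2^{n_2}\cdots a_m^{n_m}$.

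The production rules for $B$ are indexed by the up-sets (upward closed subsets) of $\preceq$. For each up-set $U$ I would take the unary rule that simultaneously prepends $a_{2i-1}$ to the $i$-th component whenever $2i-1\in U$ and appends $a_{2i}$ whenever $2i\in U$; together with the terminating rule $B(\epsilon,\dots,\epsilon)\leftarrow$ (the empty up-set) and the start rule above, this is a finite $\lceil m/2\rceil$-MCFG, and each such rule meets the formal requirements since every variable $x_j$ occurs exactly once.

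For soundness I would argue that in any derived term the exponent $n_i$ equals the number of applied rules whose up-set contains $i$. Since every up-set is upward closed, $i\preceq j$ forces each up-set containing $i$ to contain $j$, whence $n_i\le n_j$; thus every generated word lies in $L_\preceq$. For completeness, given a word $a_1^{n_1}\cdots a_m^{n_m}\in L_\preceq$ I would use the level decomposition: for each $t\ge 1$ the set $U_t=\{i : n_i\ge t\}$ is an up-set of $\preceq$ (if $i\in U_t$ and $i\preceq j$ then $n_j\ge n_i\ge t$), and applying the rule of $U_t$ once for each $t=1,\dots,\max_i n_i$ produces exactly the exponent vector $(n_i)$, because position $i$ is incremented precisely for those $t$ with $n_i\ge t$.

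The main point to get right is that the letters really come out in the prescribed order: this is guaranteed by the both-ends growth, since within each component the $a_{2i-1}$'s are only ever prepended and the $a_{2i}$'s only ever appended, so the order in which the (commuting) rules are applied is irrelevant. A second subtlety is that $\preceq$ need not be total, so the up-sets required by different words need not form a single chain; the decomposition above nonetheless handles this uniformly. Alternatively one could first invoke Remark~\ref{rmk:precequnion} to reduce to a total preorder, where the up-sets do form a chain and the construction specializes to a transparent nested growth, and then close under the (easy, rank-preserving) finite union. I expect the verification to be routine; the only genuine idea is the pairing of letters into components together with the indexing of rules by up-sets.
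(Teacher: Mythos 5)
Your proposal is correct and is essentially the paper's own construction: both pair the letters into $\lceil m/2\rceil$ components of the form $a_{2i-1}^{p}a_{2i}^{q}$ grown from both ends by unary rules that simultaneously increment an upward-closed set of indices, with a terminating rule $(\epsilon,\dots,\epsilon)$ and a start rule concatenating the components. The only divergence is bookkeeping: the paper first reduces to a total preorder (via Remark~\ref{rmk:precequnion} and closure under finite unions) and indexes its rules by single elements $j$, i.e.\ by the principal up-sets $\{i : j \preceq i\}$, proving completeness by peeling off the top level of a minimal counterexample, whereas you index rules by arbitrary up-sets and realise a word directly through its level sets $U_t$ --- a slightly more self-contained route that avoids the totalisation step.
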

\begin{proof}
It is known (see for instance \cite{Se91}) that the class of $k$-MCFLs is a full AFL; in particular it is closed under substitution and taking finite unions. Thus it is enough to consider the case where $m=2k$ is even, the case $m=2k-1$ follows by substituting $\epsilon$ for $a_{2k}$. Additionally, by Remark~\ref{rmk:precequnion} we may assume that $\preceq$ is a total preorder. 

We show that $L_\preceq$ is generated by the $k$-MCFG $\Gcal=(\Non=\{S,A\},\Si,\Pro,S)$, where $A$ has rank $k$ and $\Pro$ consists of the rules
\begin{align*}
	S(x_1 x_2 \cdots x_k) &\leftarrow A(x_1,x_2,\dots, x_k)  \\[4pt]
	A(\epsilon, \epsilon, \dots, \epsilon) &\leftarrow
\end{align*}
and for every $j \in [2k]$ the additional rule $\rho_j$ given by
\begin{equation*}\label{eq:rule}
	A(y_{1} x_1 y_{2}, y_{3} x_2 y_{4}, \dots, y_{2n-1} x_n y_{2n})\leftarrow A(x_1, x_2, \dots, x_n),
\end{equation*}
where 
\[
y_i = 
\begin{cases}
a_i & \text{if }j\preceq i,\\
\epsilon &\text{otherwise.}
\end{cases}
\]

First note that if $\vdash A(w_1, \dots, w_k)$, then each $w_l$ has the form $w_l=a_{2l-1}^{n_{2l-1}} a_{2l}^{n_{2l}}$, and it holds that $n_i \leq n_j$ whenever $i \preceq j$. This is clearly true for $A(\epsilon,\epsilon, \dots, \epsilon)$ and it is preserved when applying the rule $\rho_j$, which adds one instance of the letter $a_j$ and every letter $a_i$ with $j \preceq i$. Every word $w$ generated by $\Gcal$ is in $L_\preceq$ since it is the concatenation $w_1 \cdots w_k$ of strings $w_l$ such that $\vdash A(w_1, \dots, w_k)$.

Next we show that any given word in $L_\preceq$ is generated by $\Gcal$. 
Assume for a contradiction that there is a word in $L_\preceq$ which is not generated by $\Gcal$. Pick such a word $w=a_1^{n_1} a_2^{n_2} \cdots a_{2k}^{n_{2k}}$ for which $n_{\maxi}=\max \{n_l \mid l\in[2k]\}$ is minimal.
As $\Gcal$ generates the empty word, $w \neq \epsilon$ and $n_{\maxi} \geq 1$. For $l \in [2k]$ let $n_l'=n_l$ if $n_l < n_\maxi$, and let $n_l'=n_{\maxi}-1$ otherwise. 
Since $w \in L_\preceq$ we have $n_i' \leq n_j'$ whenever $i \preceq j$, and thus $w'=a_1^{n_1'} a_2^{n_2'} \cdots a_{2k}^{n_{2k}'} \in L_\preceq$. By minimality of $w$ the word $w'$ is generated by $\Gcal$, and in particular $\vdash A(a_1^{n_1'} a_2^{n_2'}, \dots, a_{2k-1}^{n_{2k-1}'} a_{2k}^{n_{2k}'})$. Pick some minimal $j$ with respect to $\preceq$ from the set $\{l \in [2k] \mid n_l=n_\maxi\}$. Applying the rule $\rho_j$ to $A(a_1^{n_1'} a_2^{n_2'}, \dots, a_{2k-1}^{n_{2k-1}'} a_{2k}^{n_{2k}'})$ yields $\vdash A(a_1^{{n_1}} a_2^{n_2}, \dots, a_{2k-1}^{n_{2k-1}} a_{2k}^{n_{2k}})$; consequently $\Gcal$ generates $w$, contradicting our assumption.
\end{proof}

\begin{thm}
\label{thm:main_general2}
For every connected preorder $\preceq$ the language $L_\preceq=\{a_1^{n_1} a_2^{n_2} \cdots a_m^{n_m} \mid i \preceq j \Rightarrow n_i \leq n_j \}$ over the alphabet $\Si=\{a_1, \dots, a_m\}$ is not a $(\lceil m/2\rceil-1)$-MCFL.
\end{thm}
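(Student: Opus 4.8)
The plan is to argue by contradiction. Set $k=\lceil m/2\rceil$ and suppose that $L_\preceq$ is a $(k-1)$-MCFL; by Lemma~\ref{lem:normalform} I may fix a $(k-1)$-MCFG $\Gcal$ in normal form generating it. For a parameter $n$ to be chosen large I would consider the \emph{diagonal} word $w=a_1^n a_2^n\cdots a_m^n$, which lies in $L_\preceq$ for \emph{every} preorder since all of its exponents coincide. The idea is to locate a repeated rule in a derivation tree of $w$ and to pump it: the rank bound $k-1$ will limit how many of the $m$ runs a single pump can affect, and the connectedness of $\preceq$ will then force a contradiction. This is the exact analogue of the classical argument that $L_3=\{a^ib^jc^k\mid i\ge j\ge k\}$ is not context-free, where a short pumped factor meets at most two of the three letters.

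First I would extract a useful pumpable pair. Since $\Gcal$ is in normal form, every derivation tree of $w$ has exactly $mn$ leaves and branching bounded by a constant depending only on $\Gcal$, so its height tends to infinity with $n$. Because letter-free chains of unary rules have length bounded in terms of $|\Non|$, a deepest root-to-leaf path must, once $n$ is large, contain arbitrarily many rule applications that emit a terminal; by pigeonhole two nodes $v$ (an ancestor) and $v'$ (a descendant) then carry the same production rule $\rho$, whose left-hand non-terminal $A$ has some rank $r\le k-1$, with at least one terminal produced strictly between them. Writing $\bar p$ for the tuple derived at $v'$, $C$ for the context between $v$ and $v'$, and $O$ for the part of the tree above $v$, Remark~\ref{rmk:subs} shows that $w^{(t)}:=O(C^t(\bar p))$ is derivable, hence lies in $L_\preceq$, for every $t\ge 0$, with $w=w^{(1)}$. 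As each application of $C$ contributes a fixed multiset of terminals, the number $N_i(t)$ of occurrences of $a_i$ in $w^{(t)}$ is affine, $N_i(t)=n+(t-1)\alpha_i$, with integer slope $\alpha_i\ge 0$, and $\sum_i\alpha_i\ge 1$ by the choice of $v,v'$.

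The technical heart of the argument, and the step I expect to be the main obstacle, is the bound
\[\#\{i\in[m]:\alpha_i>0\}\le 2r.\]
Intuitively, the plug $\bar p$ contributes $r$ blocks of \emph{fixed} content to $w^{(t)}$, and all growth is concentrated in the glue that accumulates around them, at the $2r$ seams bordering these blocks. The difficulty is that $C$ may permute its components and nest its glue in complicated ways, so the count must be controlled by using that every $w^{(t)}$ is sorted of the shape $a_1^{*}a_2^{*}\cdots a_m^{*}$. Concretely, I would show that sortedness forbids any non-trivial reordering of the $r$ blocks as $t$ varies and forces the glue at each seam to consist of a single letter, so that at most $2r$ runs can have positive slope. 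Since $r\le k-1$ this yields $\#\{i:\alpha_i>0\}\le 2(k-1)<m$.

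Finally I would combine the pump with connectedness. For a comparable pair $i\preceq j$, membership $w^{(2)}\in L_\preceq$ gives $\alpha_i\le\alpha_j$, while $w^{(0)}\in L_\preceq$ gives $n-\alpha_i\le n-\alpha_j$, that is $\alpha_i\ge\alpha_j$; hence $\alpha_i=\alpha_j$ whenever $i$ and $j$ are comparable. As the comparability graph of $\preceq$ is connected, all the $\alpha_i$ equal a common value $\alpha$. If $\alpha=0$ this contradicts $\sum_i\alpha_i\ge 1$; if $\alpha>0$ then all $m$ runs have positive slope, contradicting $\#\{i:\alpha_i>0\}\le 2(k-1)<m$. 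Either way we reach a contradiction, so $L_\preceq$ is not a $(k-1)$-MCFL.
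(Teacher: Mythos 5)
Your overall strategy --- pump a repeated rule up and down by one step, deduce $\alpha_i=\alpha_j$ for comparable $i\preceq j$ from membership of $w^{(0)}$ and $w^{(2)}$ in $L_\preceq$, then use connectedness of the comparability graph to make all slopes equal --- coincides with the paper's argument almost verbatim. Where you diverge is the endgame, and that is also where your proposal has its gap: the bound $\#\{i:\alpha_i>0\}\le 2r$ is stated and correctly flagged as the technical heart, but not proven; ``sortedness forbids reordering and forces single-letter glue'' is an intuition rather than an argument (and the reordering issue is in fact a red herring for the count). The claim is true and can be completed: in normal form every component of every derived tuple is a contiguous substring of the final word; the composite context $C$ from $v$ to $v'$ writes output component $l$ as $u_{l,0}x_{\sigma_l(1)}u_{l,1}\cdots x_{\sigma_l(s_l)}u_{l,s_l}$ with $\sum_l s_l=r$, so there are at most $2r$ glue words; and each glue word occurs as $t$ pairwise disjoint contiguous substrings of $w^{(t)}$. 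A glue word containing two distinct letters would have to span, in each of its $t$ copies, the unique boundary between the corresponding runs of the sorted word $w^{(t)}$ --- impossible for $t\ge 2$. Hence each glue word is a power of a single letter and at most $2r\le 2(\lceil m/2\rceil-1)<m$ slopes are positive. Without some such argument your final contradiction does not go through.

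The paper avoids this lemma entirely. It takes \emph{three} nested occurrences $D_3\subseteq D_2\subseteq D_1$ of the same combiner: the outer pair yields exactly your slope argument, while the inner pair forces $\abs{D_2}_i>\abs{D_3}_i\ge 0$, i.e.\ every letter of $\Si$ already occurs in the tuple derived at $D_2$. Since that tuple has at most $\lceil m/2\rceil-1$ components, each a contiguous substring of $a_1^n\cdots a_m^n$, some component contains three distinct letters and therefore an entire middle run $a_i^n$, giving $\abs{D_2}_i=n$ and contradicting $n\ge\abs{D_1}_i>\abs{D_2}_i$. This sidesteps any analysis of how the iterated context distributes its growth. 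Your route does work once the $2r$ bound is supplied, and it is closer in spirit to a genuine pumping lemma for bounded languages, but as written its central counting step is missing.
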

\begin{proof}
Assume that there is a MCFG $\Gcal=(\Non,\Si,\Pro,S)$ generating $L_\preceq$, and assume that $\Gcal$ is given in normal form as in Lemma~\ref{lem:normalform}.

For a derivation tree $D$ and $i \in [m]$ denote by $\abs{D}_i$ the total number of letters $a_i$ occurring in all substrings contained in the term $\ell(D)$ and by $\abs{D}=\sum_{i=1}^m \abs{D}_i$ the combined length of all substrings. Since $\Gcal$ is in normal form, if $\ell(D)$ is not a terminating rule and $D_1, \dots, D_k$ are the derivation trees rooted at the $k$ children of the root of $D$ we have 
\begin{equation} \label{eq:normalnterm}
\abs{D}_i=\sum_{j=1}^k \abs{D_j}_i.
\end{equation}
Moreover, if $\ell(D)$ is a terminating rule, then 
\begin{equation}\label{eq:normalterm}
\abs{D}=1.
\end{equation}
Call a rule a \emph{combiner}, if its right hand side contains at least 2 non-terminals and therefore a vertex of any derivation tree labelled by $\rho$ has at least $2$ children. Note that there is an upper bound $K$ such that the right hand side of any combiner contains at most $K$ non-terminals. 

Fix $n > K^{2 C}$, where $C$ is the number of combiners in $\Pro$, and let $D$ be a derivation tree of $S(a_1^n a_2^n \cdots a_m^n)$. Then $D$ contains a path starting at the root containing at least $2C+1$ vertices labelled with combiners. If not, then \eqref{eq:normalnterm} and \eqref{eq:normalterm} imply $\abs{D}\leq K^{2C}$, contradicting our choice of $n$. In particular the path contains at least 3 vertices labelled with the same combiner $\rho$. Denote the subtrees rooted at these three vertices by $D_1$, $D_2$, and $D_3$ such that $D_3 \subseteq D_2 \subseteq D_1$.

We claim that for any $i \preceq j$ we have $\abs{D_1}_j-\abs{D_2}_j = \abs{D_1}_i-\abs{D_2}_i$, and that an analogous statement holds for $D_2$ and $D_3$.

Assume that $\abs{D_1}_j-\abs{D_2}_j > \abs{D_1}_i-\abs{D_2}_i$. By \eqref{eq:normalnterm} the derivation tree $D'$ obtained by replacing $D_1$ by $D_2$ (compare Remark~\ref{rmk:subs}) satisfies
\[\abs{D'}_j-\abs{D'}_i=\abs{D}_j-(\abs{D_1}_j-\abs{D_2}_j) - \abs{D}_i + (\abs{D_1}_i-\abs{D_2}_i)<0,\]
because $\abs{D}_j=\abs{D}_i=n$. This is a contradiction, as the word $w(D')$ is not in $L_\preceq$.
If $\abs{D_1}_j-\abs{D_2}_j < \abs{D_1}_i-\abs{D_2}_i$, then the derivation tree $D''$ obtained by replacing $D_2$ by $D_1$ satisfies
\[\abs{D''}_j-\abs{D''}_i=\abs{D}_j+(\abs{D_1}_j-\abs{D_2}_j) - \abs{D}_i - (\abs{D_1}_i-\abs{D_2}_i)<0,\]
which is a contradiction for the same reason as before. This completes the proof of our claim.

\begin{figure}
\centering
\begin{subfigure}[t]{0.3\textwidth}
    \vskip 0pt
    \centering
    \begin{tikzpicture}
    \begin{scope}[scale=0.65]
  	    \draw (0,0) -- (-3.25,-5);
  	    \draw (0,0) -- (3.25,-5);
  	    \draw (-0.5,-2) -- (-2.45,-5);
  	    \draw (-0.5,-2) -- (1.45,-5);
  	    \draw (0,-4) -- (-0.65,-5);
  	    \draw (0,-4) -- (0.65,-5);
  	    \draw (-3.25,-5) -- (3.25,-5);
  	    \fill[pattern=north west lines] (0,-4) -- (-0.65,-5) -- (0.65,-5) -- cycle;
  	    \fill[pattern=dots] (-0.5,-2) -- (-2.45,-5) -- (-0.65,-5) -- (0,-4) -- (0.65,-5) -- (1.45,-5) -- cycle;
  	    \draw (0,0) -- (0,-0.3);
  	    \draw[decorate, decoration={zigzag, segment length=6mm, amplitude=1mm}] (0,-0.3) -- (-0.5,-2);
  	    \draw (-0.5,-2) -- (-0.5,-2.3);
  	    \draw[decorate, decoration={zigzag, segment length=6mm, amplitude=1mm}] (-0.5,-2.3) -- (0,-4);
  	    \node () at (-0.1,-2) {$\rho$};
  	    \node[fill=white, inner sep=1pt] () at (-0.4,-4) {$\rho$};
  	    \node () at (2.2,-2.3) {$D$};
  	    \draw[white] (0,-7);
  	\end{scope}
    \end{tikzpicture}
\end{subfigure}
\hfill
\begin{subfigure}[t]{0.3\textwidth}
    \vskip 0pt
    \centering
    \begin{tikzpicture}
    \begin{scope}[scale=0.65]
  	    \draw (0,0) -- (-3.25,-5);
  	    \draw (0,0) -- (3.25,-5);
  	    \draw (-0.5,-2) -- (-2.45,-5);
  	    \draw (-0.5,-2) -- (1.45,-5);
  	    \draw (-1.15,-3) -- (0.15,-3);
  	    \fill[pattern=north west lines] (-0.5,-2) -- (-1.15,-3) -- (0.15,-3) -- cycle;
  	    \draw (0,0) -- (0,-0.4);
  	    \draw (-3.25,-5) -- (-2.45,-5);
  	    \draw (1.45,-5) -- (3.25,-5);
  	    \draw[decorate, decoration={zigzag, segment length=6mm, amplitude=1mm}] (0,-0.4) -- (-0.5,-2);
  	    \node () at (-0.1,-2) {$\rho$};;
  	    \node () at (2.2,-2.3) {$D'$};
  	    \draw[white] (0,-7);
  	\end{scope}
  	\end{tikzpicture}
\end{subfigure}
\hfill
\begin{subfigure}[t]{0.3\textwidth}
    \vskip 0pt
    \centering
    \begin{tikzpicture}
    \begin{scope}[scale=0.65]
  	    \draw (0,0) -- (-3.25,-5);
  	    \draw (0,0) -- (3.25,-5);
  	    \draw (-0.5,-2) -- (-2.45,-5);
  	    \draw (-0.5,-2) -- (1.45,-5);
  	    \draw (0,-4) -- (-1.95,-7);
  	    \draw (0,-4) -- (1.95,-7);
  	    \draw (0.5,-6) -- (-0.15,-7);
  	    \draw (0.5,-6) -- (1.15,-7);
  	    \draw (-3.25,-5) -- (-0.65,-5);
  	    \draw (3.25,-5) -- (0.65,-5);
  	    \draw (-1.95,-7) -- (1.95,-7);
  	    \fill[pattern=north west lines] (0.5,-6) -- (1.15,-7) -- (-0.15,-7) -- cycle;
  	    \fill[pattern=dots] (-0.5,-2) -- (-2.45,-5) -- (-0.65,-5) -- (0,-4) -- (0.65,-5) -- (1.45,-5) -- cycle;
  	    \fill[pattern=dots] (0,-4) -- (-1.95,-7) -- (-0.15,-7) -- (0.5,-6) -- (1.15,-7) -- (1.95,-7) -- cycle;
  	    \draw (0,0)--(0,-0.4);
  	    \draw[decorate, decoration={zigzag, segment length=6mm, amplitude=1mm}] (0,-0.4) -- (-0.5,-2);
  	    \draw (-0.5,-2)--(-0.5,-2.4);
  	    \draw[decorate, decoration={zigzag, segment length=6mm, amplitude=1mm}] (-0.5,-2.4) -- (0,-4);
  	    \draw (0,-4)--(0,-4.4);
  	    \draw[decorate, decoration={zigzag, segment length=6mm, amplitude=1mm}] (0,-4.4) -- (0.5,-6);
  	    \node () at (-0.1,-2) {$\rho$};
  	    \node[fill=white, inner sep=1pt] () at (-0.4,-4) {$\rho$};
  	    \node[fill=white, inner sep=1pt] () at (0.1,-6) {$\rho$};
  	    \node () at (2.2,-2.3) {$D''$};
  	\end{scope}
  	\end{tikzpicture}
\end{subfigure}
  	\caption{Replacing $D_1$ with $D_2$ yields $D'$ and replacing $D_2$ with $D_1$ yields $D''$.}
  	\label{fig:derivtree}
\end{figure}
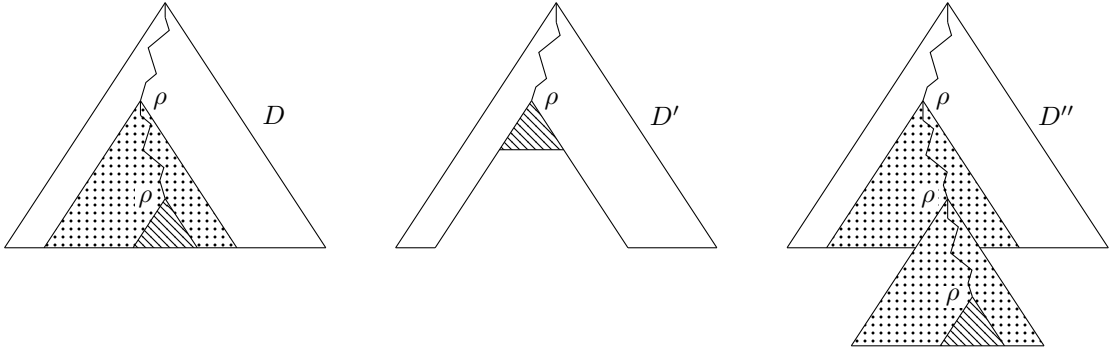

If $i,j \in [m]$ are comparable in $\preceq$, then $\abs{D_1}_j-\abs{D_1}_i = \abs{D_2}_j-\abs{D_2}_i$. By connectedness of the comparability graph this is true for any pair $i,j$. 

Since $\rho$ is a combiner, $\abs{w(D_1)}> \abs{w(D_2)}$. In particular $\abs{D_1}_i>\abs{D_2}_i$ for some and thus for every $i \in [m]$. Analogously we obtain $\abs{D_2}_i > \abs{D_3}_i$; in particular $\abs{D_2}_i >0$ holds for every $i \in [m]$. 

Assume now for a contradiction the Grammar $\Gcal$ is $(\lceil m/2 \rceil-1)$-MCF. Then $w(D_2)$ consists of at most $\lceil m/2 \rceil-1$ strings and each of them is a substring of $a_1^n a_2^n \cdots a_m^n$ because $\Gcal$ is in normal form. Every letter of $\Si$ appears in $w(D_2)$, hence one of the strings must contain at least 3 different letters and thus be of the form $a_{i-1}^{n_1} a_i^n a_{i+1}^{n_2}$ for some $i \in \{2, \dots, m-1\}$. As this contradicts the fact that $n \geq \abs{D_1}_i>\abs{D_2}_i=n$, the grammar $\mathcal{G}$ must be at least $\lceil m/2 \rceil$-MCF.
\end{proof}

\bibliographystyle{plain}
\bibliography{latex}

\end{document}